\documentclass[12pt]{article}

\usepackage{amsmath}
\usepackage{amssymb}
\usepackage{amsfonts}
\usepackage{latexsym}
\usepackage{color}
\usepackage[normalem]{ulem}
\usepackage{cancel}

\catcode `\@=11 \@addtoreset{equation}{section}

\catcode `\@=12



  \voffset1cm

\newcommand{\be}{\begin{equation}}
\newcommand{\en}{\end{equation}}
\newcommand{\bea}{\begin{eqnarray}}
\newcommand{\ena}{\end{eqnarray}}
\newcommand{\beano}{\begin{eqnarray*}}
\newcommand{\enano}{\end{eqnarray*}}
\newcommand{\bee}{\begin{enumerate}}
\newcommand{\ene}{\end{enumerate}}

\newcommand{\mb}{\mathbb}

\newcommand{\mc}{\mathcal}

\newcommand{\D}{{\mc D}}
\newcommand{\LL}{\mc L}

\newcommand{\Sc}{{\cal S}}
\newcommand{\E}{{\cal E}}
\newcommand{\F}{{\cal F}}
\newcommand{\G}{{\cal G}}

\newcommand{\Lc}{{\cal L}}

\newcommand{\1}{1 \!\! 1}

\newcommand{\V}{{\cal V}}
\newcommand{\X}{{\cal X}}
\newcommand{\ip}[2]{\langle {#1},{#2}\rangle}

\newcommand{\Hil}{\mc H}

\newtheorem{thm}{Theorem}

\newtheorem{prop}[thm]{Proposition}
\newtheorem{defn}[thm]{Definition}

\newtheorem{rem}[thm]{Remark}
\newenvironment{proof}{\noindent {\bf Proof. }}{\hfill$\square$ \vspace{3mm}\endtrivlist}
\newcommand{\balpha}{{\mbox{\boldmath${\alpha}$}}}
\newcommand{\bgamma}{{\mbox{\boldmath${\gamma}$}}}

\newcommand{\bbeta}{{\mbox{\boldmath${\beta}$}}}
\newcommand{\balphabar}{{\overline{\mbox{\boldmath${\alpha}$}}}}

\catcode `\@=11 \@addtoreset{equation}{section}
\catcode `\@=12

\textwidth17cm \textheight21cm

\hoffset-1.5cm \voffset-1cm

\begin{document}

\thispagestyle{empty}

\vspace*{2cm}

\begin{center}
{\Large \bf {Hamiltonians defined by biorthogonal sets } }  \vspace{2cm}\\

{\large Fabio Bagarello}\\
  Dipartimento di Energia, Ingegneria dell'Informazione e Modelli Matematici,\\
  Universit\`a di Palermo, and INFN, Torino\\
e-mail: fabio.bagarello@unipa.it\\
Tel: +390912389722; Fax: +39091427258\\
home page: www.unipa.it/fabio.bagarello

\vspace{.4cm}

 {\large Giorgia Bellomonte}\\
  Dipartimento di Matematica e Informatica,\\ Universit\`a di Palermo, I-90123 Palermo, Italy\\e-mail: giorgia.bellomonte@unipa.it

\end{center}

\vspace*{2cm}

\begin{abstract}
\noindent In some recent papers, the studies on biorthogonal Riesz
bases has found a renewed motivation because of their connection
with pseudo-hermitian Quantum Mechanics, which deals with physical
systems described by Hamiltonians which are not self-adjoint but
still may have real point spectra. Also, their eigenvectors may form
Riesz, not necessarily orthonormal, bases  for the Hilbert space in
which the model is defined. Those Riesz bases allow a decomposition
of the Hamiltonian, as already discussed is some previous papers. However, in many physical models, one has to
deal not with  o.n. bases or with Riesz bases,  but just with
biorthogonal sets. Here, we consider the more general concept of
$\mathcal{G}$-quasi basis and we show a series of conditions under
which a definition of non self-adjoint Hamiltonian with purely point
real spectra is still possible.

\end{abstract}

\vspace{2cm}


\vfill


\newpage

\section{Introduction and preliminaries}
For several years  physicists have devoted their studies to those
systems which were described by self-adjoint Hamiltonians. This
choice have been led by the fact that the eigenvalues of a
Hamiltonian describing a physical system represent the energy of
that system hence they must be real to have a physical meaning and self-adjoint Hamiltonians
have real eigenvalues. This is important to ensure that the dynamics of the system is unitary, so that the probability described by the wave-function is preserved during the time evolution. In recent
years many physicists (as Bender and his collaborators) first, and
mathematicians after, started to consider with more and more
interest non self-adjoint Hamiltonians with real spectra because
they described some physical system. The beginning of the story goes probably back to the paper \cite{ben1}, in which the eigenstates of the manifestly non self-adjoint Hamiltonian $H=p^2+ix^3$ were deduced and found to be real. Here $x$ and $p$ are the position and momentum operators, satisfying the Weyl algebra. A very recent book on this and related topics is \cite{bagbook_thebook}.

The key objects in that analysis  were the
so-called PT-symmetric Hamiltonians with real point spectra.  A
PT-symmetric Hamiltonian is an operator such that
$$ PTH(PT)^{-1} = PTHPT = H,$$ where $P$ and $T$ are respectively
the operators of parity and time-reversal transformations, usually defined\footnote{In the physical literature the definition of $P$ and $T$ really depends on the particular model under consideration, and can change quite a bit, from model to model.}
according to $P x P = -x$, $P p P = T p T = -p$, $T i\1 T = -i\1$,
where $x, p, \1$ are respectively the position, momentum, and
identity operators acting on the Hilbert space
$\Hil=L^2(\mathbb{R})$ and $i$ is the square root of $-1$.   Later was understood that
PT-symmetry can be replaced by more general requirements, still
getting the same conclusion: at the beginning of this century Mostafazadeh
introduced the concept of pseudo-hermitian (also called
pseudo-symmetric or quasi-Hermitian) operators as those operators satisfying, in some sense, an intertwining relation of the form $AG=GA^\dagger$. Here $A$ is the pseudo-hermitian operator, while $G$ is a certain positive operator. Notice that, quite often, both $A$ and $G$ are unbounded, so that the equality $AG=GA^\dagger$ is only formal. However, this equality was recently made rigorous in \cite{AnTr2014}, where $A$ and $G$  are defined  as those operators
$A$, with dense domain $D(A)$ for which there exists a positive
operator $G$, with dense domain $D(G)$ in Hilbert space $\Hil$ such
that $D(A)\subset D(G)$ and
\begin{equation} \label{eq_quasihermitian}
\ip{A\xi}{G\eta}= \ip{G\xi}{A\eta}, \quad \xi, \eta \in
D(A).\end{equation} Some recent results show that, even if reality
of the eigenvalues of a certain Hamiltonian is ensured, the basis
property of its eigenstates is, in many cases, lost. {In other
words, if a non self-adjoint operator $H$ in the Hilbert space
$\Hil$ has purely punctual real spectrum, and if $H$ is PT-symmetric or,
more generally, pseudo-symmetric, it is not necessarily true that
the set of eigenstates of $H$, $\F_{\varphi}=\{\varphi_n\in\Hil\}$
and of its adjoint $H^\dagger$, $\F_{\Psi}=\{\Psi_n\in\Hil\}$, are
biorthogonal bases for $\Hil$}. Indeed, this feature was already
discussed in two papers by Davies, \cite{dav1,dav2},
 and then in several papers by one of us (FB), see \cite{bagbook} for a recent review, in \cite{petr},
 and in other recent papers. { The importance of the basis property is obvious. For our present purposes, it mainly lays in the fact that the operators which has those bases as sets of eigenvectors can
 be decomposed in terms of those sets.}  Still, quite often, the sets of eigenvectors (although are not a basis of the Hilbert space)  are  {\em complete} in $\Hil$, meaning that the only vector which is orthogonal to
 all the $\varphi_n$'s or to all the $\Psi_n$'s is the zero vector.

Recent literature has dealt with, in a certain sense, the inverse problem than that before. In \cite{fg,bit,cg}, the problem of considering some particular
biorthogonal sets of vectors to define non self-adjoint operators,
has been considered,
leading to a number of interesting results as factorizability of Hamiltonians
by some kind of lowering and raising operators. In particular, in \cite{bit} Riesz bases of a given Hilbert space have been used, while in \cite{cg} and, later, in \cite{fg},
the interest was focused again on a generalization of the notion of Riesz basis, living in a rigged Hilbert space.
In \cite{bag1}, the notion of $\G$-quasi bases  is considered.
 As we will see in the following section, two biorthogonal sets  $\F_\varphi=\{\varphi_n\in\Hil,\,n\geq0\}$ and
$\F_\Psi=\{\Psi_n\in\Hil,\,n\geq0\}$, satisfying
$\left<\varphi_n,\Psi_m\right>=\delta_{n,m}$, are called {\em
$\G$-quasi bases} if, for all $f, g\in \G$, the following holds: \be
\left<f,g\right>=\sum_{n\geq0}\left<f,\varphi_n\right>\left<\Psi_n,g\right>=\sum_{n\geq0}\left<f,\Psi_n\right>\left<\varphi_n,g\right>,
\label{01}\en  being $\G$ a dense subset of the Hilbert space $\Hil$. The physical relevance of these families of vectors is that in some physical system driven by some Hamiltonian $H$, see \cite{bagbook}, the set of eigenstates of $H$ and $H^\dagger$ turn out to be $\G$-quasi bases, even if they are and bases.

The problem of generalizing those results to biorthogonal $\G$-quasi bases raises naturally, and in fact this is the main content of this paper, where we will discuss how some of the general ideas introduced in \cite{bit} (we will work on a given Hilbert space $\Hil$, leaving a possible extension to rigged Hilbert spaces to a future analysis) still work even if Riesz bases are replaced by $\G$-quasi bases. This generalization requires us to go into two parallel directions: first of all, since $\G$-quasi bases have only be introduced recently, \cite{bag1}, and since several unusual features are related to these sets of vectors, we will describe in some details three examples of them. Secondly, physical applications of $\G$-quasi bases show that unbounded operators (and intertwining operators in particular, see Section IV) become relevant,
 in this context.   So we will take care of this aspect and we will also discuss in some details what happens when one deals with $\G$ quasi-bases and, therefore, when a resolution of the identity can only be introduced in a weak form, as in (\ref{01}).

The paper is organized as follows: in Section II we introduce some useful definitions on bases and biorthogonal sets and we shortly review the results in \cite{bit} on Hamiltonians defined by Riesz bases. In Section III we give examples of $\G$-quasi bases, and we start considering a few physical consequences, which are analyzed further in Section IV. Section V contains our conclusions.

\section{Some useful definitions and results}\label{sect2}
{Let us begin by recalling some well known definitions. Let  $\Hil$
be a Hilbert space with scalar product $\ip{\cdot}{\cdot}$ linear in
the second entry. We recall the following definitions.
\begin{defn}
{The sequence $\X=\{\xi_n\in\Hil, n\geq0\}$ is said to be complete
if $\ip{\varphi}{\xi_n}=0$ for every $n\in\mathbb{N}$, with $\varphi\in\Hil$, then $\varphi=0$.}

\end{defn}

This means that the set $\X$ is if the only vector $\varphi\in\Hil$ which is orthogonal to all the
$\xi_n$'s is necessarily the zero vector. Sometimes in the literature, rather than {\em complete} the word {\em total} is adopted.

In our analysis the following, slightly modified, version of
completeness will  be also used.
{\begin{defn}
Let
$\F=\{f_n\in\Hil,\,n\geq0\}$ be a set and $\V\subseteq\Hil$ a subspace,  we
will say that $\F$ is complete  in $\V$ if, taken $\varphi\in \V$
such that $\left<\varphi,f_n\right>=0$ for all $n\geq0$, then
$\varphi=0$. \end{defn}
In particular, if $\V=\Hil$, then we will simply say
that $\F$ is complete.}
{
\begin{defn}
A set $\E=\{e_n\in\Hil,\,n\geq0\}$ is said to be a (Schauder)
\index{basis}basis for $\Hil$ if for every $f\in\Hil$ there exists  a {\em unique} sequence $\{c_n(f)\}$ of complex numbers (depending on the vector $f$) such that
\begin{equation}\label{decomp} f = \sum_{n=0}^\infty c_n(f) e_n. \end{equation}
\end{defn}

\begin{defn}A (Schauder) basis for $\Hil$, $\E=\{e_n\in\Hil,\,n\geq0\}$,  is said
to be an orthonormal (o.n.) basis for $\Hil$ if
$\left<e_n,e_m\right>=\delta_{n,m}$, for every $n,m\geq 0$.
\end{defn}If  $\E=\{e_n\in\Hil,\,n\geq0\}$  is  an orthonormal (o.n.) basis for $\Hil$, then the coefficient $c_n(f)$ in (\ref{decomp}) can be written as $c_n(f)=\left<e_n,f\right>$, for every $n\geq0$ and for every $f\in\Hil$.}

\begin{defn}\label{defRB} A set $\F=\{f_n\in\Hil,\,n\geq0\}$ is said to be a Riesz basis for
$\Hil$ if there exists a bounded operator $T$ on $\Hil$, with
bounded inverse, and an orthonormal basis
$\E=\{e_n\in\Hil,\,n\geq0\}$ of $\Hil$, such that $f_n=Te_n$, for
all $n\geq0$.
\end{defn}

\begin{rem}It is clear that if $\F$ is a Riesz basis, then in
general $\left<f_n,f_m\right>\neq\delta_{n,m}$, for every $n,m\geq 0$. Moreover, if $T$ and $\E$ are as in Definition \ref{defRB}, the set of vectors $\LL=\{l_n=(T^{-1})^\dagger
e_n,\,n\geq0\}$ is a Riesz basis  for $\Hil$ as well {(called dual basis)},
and $\left<f_n,l_m\right>=\delta_{n,m}$ for all $n,m\geq0$, { i.e.} $\F$
and $\LL$ are {\em biorthogonal}. Here the symbol $\dagger$ indicates
the adjoint with respect to the {\em natural} scalar
product\footnote{The word {\em natural} is used since, quite often,
in Physics literature on $\mathcal{PT}$-quantum mechanics other
scalar products are also introduced.} $\left<.,.\right>$ in $\Hil$.
In these hypotheses, any vector $h\in\Hil$ can be expanded as
follows: \be
h=\sum_{n=0}^\infty\left<f_n,h\right>\,l_n=\sum_{n=0}^\infty\left<l_n,h\right>\,f_n.
\label{11}\en  Here and in the remainder of the paper the convergence of the various series is always assumed to be  unconditional.
Of course, any o.n. basis is a Riesz basis, with $T=\1$ the identity
operator on $\Hil$. In this case the three sets above just collapse:
$\E=\F=\LL$.
\end{rem}

The same expansion as in \eqref{11} holds when $\F$ and
$\LL$ are { biorthogonal} bases, but not necessarily of the Riesz kind.
Notice that each basis $\F$ in $\Hil$ possesses a unique
biorthogonal set $\V$ which is also a basis: hence the expansion in
\eqref{11} is unique, \cite[Theorem 3.3.2]{chri}. If
$\F=\{f_n\in\Hil,\,n\geq0\}$ is a basis (in any of the senses
considered so far), then $\F$ is complete, while the converse is not true, in
general.
In fact, while for o.n. sets completeness is equivalent to the basis property, for non o.n. sets this is false,  see e.g. \cite[Section 3.2.1]{bagbook}.

\begin{rem}

It should be observed that equation (\ref{11}), as many others results in the rest of the paper, make sense, in principle, also in the context of frame theory, see for instance \cite{chri,dau,dau2,heil}. However, we will not  discuss the relation between $\G$-quasi bases and frames here.
In fact, we are more interested to those sets having no redundancy, in contrast with what happens in frame theory, since, while this aspect is surely important for
signal analysis, it is not so important for quantum mechanics, which is our main
interest here. We will comment something more on this aspect along the paper.

\end{rem}

For convenience of the reader we also recall the following definition.

\begin{defn} Let
$\F=\{f_n\in\Hil,\,n\geq0\}$ and $\LL=\{l_n\in\Hil,\,n\geq0\}$ be two
biorthogonal sets. { The projection operator $P_k$ is defined as
$P_kf=\left<f_k,f\right>l_k$}.\end{defn}
 \begin{rem}\label{rem proj}If $\F$ is a basis, both $P_k$ and $\sum_{k=1}^NP_k$ are uniformly bounded. Viceversa, if $\F$ is complete, and if $\sum_{k=1}^NP_k$
  is uniformly bounded, then $\F$ is a basis, \cite[Lemma 3.3.3]{dav}. This fact will be used later on. In particular, the norm of $P_k$ and those of $f_k$ and $l_k$ are related by the following relation:
$ 1\leq\|P_k\|=\|f_k\|\,\|l_k\|, $ for all $k$.\end{rem}

Another absolutely non trivial difference between o.n. and not o.n.
sets has to do with the possibility of extending the basis property
from a dense subset of $\Hil$ to the whole Hilbert space. Let
$\E=\{e_n\in\Hil,\, n\geq0\}$ be an o.n. set and let $\V$ be a dense
subspace of $\Hil$. Suppose  that each vector $f\in\V$ can be
written as follows:
 $f=\sum_n\left<e_n,f\right>\,e_n$, then $\E$ is an o.n. basis for $\V$ and, furthermore, all the vectors in $\Hil$, and not only those in $\V$, admit a similar expansion: $\hat f=\sum_n\left<e_n,\hat f\right>\,e_n$, $\forall\,\hat f\in\Hil$. Hence $\E$ is also an o.n. basis for $\Hil$ \cite[Theorem 3.4.7]{hansen}.
  Let us now replace the o.n. set $\E$ with a second, no longer o.n.,
set $\X=\{x_n\in\Hil,\, n\geq0\}$, and let us again assume that
every $f\in\V$ can be written, in an unique way, as
$f=\sum_n\,c_n(f)\,x_n$, for certain coefficients $c_n(f)$ depending
on $f$. Then, explicit counterexamples show that a similar expansion
does not hold in general for all vectors in $\Hil$, see \cite[Section 3.2.1]{bagbook}. This will be evident also in the examples
considered later on in this paper.
Then we see once again that loosing orthonormality\footnote{please recall that we are not considering any redundancy here, as we have already stressed above.}  produces new and
often undesired mathematical consequences.
Moreover, the
lack of orthonormality has physical consequences as well,
\cite{bagbook,dav1,dav2,petr}. In fact, when a non self-adjoint
Hamiltonian $H$  with only punctual real spectrum is considered, the set
of its eigenstates, $\F_{\varphi}=\{\varphi_n\in\Hil\}$, is not
an o.n. one, in general. Therefore, we are forced to deal with problems
similar to those discussed above. In particular,   even if
$\F_{\varphi}$  were complete, it would not  be necessarily  a
basis for $\Hil$. Also, even if each vector in a dense subspace of $\Hil$ can be linearly expanded in terms of the $\varphi_n$'s, a similar expansion may not be true in all of $\Hil$. For this reason, in connection with several recent applications, see \cite{bagbook, bag1, bag0} and references therein,  the notion of
$\G$-quasi basis was introduced and used heavily { in the analysis of the eigenvectors and the eigenvalues of certain non self-adjoint Hamiltonians}.

\begin{defn}
Let $\G$ be a dense subspace of Hilbert space $\Hil$. Two
biorthogonal sets $\F_\varphi=\{\varphi_n\in\Hil,\,n\geq0\}$ and
$\F_\Psi=\{\Psi_n\in\Hil,\,n\geq0\}$
($\left<\varphi_n,\Psi_m\right>=\delta_{n,m}$, for every $n,m\geq 0$), are called {\em
$\G$-quasi bases} if, for all $f, g\in \G$, the following holds: \be
\left<f,g\right>=\sum_{n\geq0}\left<f,\varphi_n\right>\left<\Psi_n,g\right>=\sum_{n\geq0}\left<f,\Psi_n\right>\left<\varphi_n,g\right>.
\label{12} \en When $\G=\Hil$, i.e. when (\ref{12}) holds for all
$f,g\in\Hil$, then $\F_\varphi$ and $\F_\Psi$ are simpler called
{\em quasi bases}.\end{defn}From formula \eqref{12}, it follows
immediately that, if $f\in \G$ is orthogonal to all the $\Psi_n$'s
(or to all the $\varphi_n$'s), then $f$ is necessarily zero and, as
a consequence, $\F_\Psi$ (or $\F_\varphi$)  is complete in $\G$.
Indeed, using (\ref{12}) with $g=f\in\G$, {if $\left<\Psi_n,f\right>=0$ (or $\left<f,\varphi_n\right>=0$)
for all $n$, we find
$\|f\|^2=\sum_{n\geq0}\left<f,\varphi_n\right>\left<\Psi_n,f\right>=0$}. Therefore  $\|f\|=0$, so that  $f=0$.}

When  $\F_\varphi$ and $\F_\Psi$ are  quasi bases, then they  are
complete. We refer to \cite{bagbook} for more details on $\G$-quasi
bases. Here we just want to observe that equation (\ref{12}) can be
seen as a weak version of a resolution of the identity,
which turns out to be very important, if not essential, in several
physical applications, as, just to cite one, in quantization
problems, see Part 2 of \cite{gazbook}, i.e. when one wants to replace a classical system with its quantum counterpart\footnote{For instance, a classical harmonic oscillator with energy (in suitable units) $E=\frac{1}{2}(p^2+x^2)$ can be {\em quantized} replacing the time-depending functions $x(t)$ and $p(t)$ with the operators $\hat X:=\sum_{n,m=1}^\infty\left(\int_{\Bbb R} \overline{e_n(x)}\,xe_m(x)\,dx\right) P_{n,m}$ and $\hat P:=\sum_{n,m=1}^\infty\left(\int_{\Bbb R} \overline{e_n(x)}\,\left(-i\frac{d}{dx}\right)e_m(x)\,dx\right) P_{n,m}$. Here $e_n(x)=\frac{1}{\sqrt{2^n\,n!\sqrt{\pi}}}\,H_n(x)\,e^{-\frac{x^2}{2}}$, $H_n(x)$ being the $n$-th Hermite polynomial, and $P_{n,m}$ is the operator defined by $(P_{n,m}f)(x)=\left<e_m,f\right>e_n(x)$, for all $f(x)\in \Lc^2(\Bbb R)$. Of course, one should pay attention to the convergence of the series appearing in the definition of $\hat X$ and $\hat P$. We refer to \cite{gazbook} for more details.}
\vspace{2mm}

\begin{rem} It could happen  that it is easier to check that $f$ is orthogonal to, say, all the $\varphi_{2n}$'s and to all the $\Psi_{2n+1}$'s. Then, again, $f=0$, for similar reasons. Of course, this implies in turns that $f$ is also orthogonal to all the $\varphi_{2n+1}$'s and to all the $\Psi_{2n}$'s.
\end{rem}

\subsection{Working with Riesz bases}
\label{sectRB}

We now briefly review what is the role of Riesz bases in our scheme. Let $\F_\phi=\{\phi_n\in\Hil,\,n\geq0\}$ be a Riesz
basis in the Hilbert space $\Hil$ and let  $\F_\psi=\{\psi_n\in\Hil,\,n\geq0\}$ be its dual biorthogonal Riesz basis. In \cite{bit}
  the following operators have been introduced {and their properties have been studied in details}:

$$ \left\{\begin{array}{l} D(H_{\phi, \psi}^\balpha)=\left\{f \in \Hil; \sum_{n=0}^\infty \alpha_n \left<\psi_n,f\right>\phi_n \mbox{ exists in }\Hil \right\}\\
{ } \\
H_{\phi, \psi}^\balpha f= \sum_{n=0}^\infty \alpha_n
\left<\psi_n,f\right>\phi_n, \; f \in D(H_{\phi, \psi}^\balpha)
\end{array}\right.
$$
and
$$ \left\{\begin{array}{l} D(H_{\psi, \phi}^\balpha)=\left\{f \in \Hil; \sum_{n=0}^\infty \alpha_n \left<\phi_n,f\right>\psi_n \mbox{ exists in }\Hil \right\}\\
{ } \\
H_{\psi, \phi}^\balpha f= \sum_{n=0}^\infty \alpha_n
\left<\phi_n,f\right>\psi_n, \; f \in D(H_{ \psi,\phi}^\balpha)
\end{array}\right. .
$$
Here $\balpha=\{\alpha_n\}$ is any sequence of complex numbers. Once again we recall that the convergence of the series is, as everywhere in this paper, the unconditional one.

\vspace*{1mm}

Once we put $\D_\phi:= \mbox{span}\{\phi_n\}$ and $\D_\psi:=
\mbox{span}\{\psi_n\}$,  it is clear that $\D_\phi\subset D(H_{\phi,
\psi}^\balpha)$, $\D_\psi \subset D(H_{\psi, \phi}^\balpha)$ and
that $H_{\phi, \psi}^\balpha \phi_k =\alpha_k \phi_k$, and $H_{\psi,
\phi}^\balpha \psi_k =\alpha_k \psi_k$, $k\geq 0$, so that
$\phi_k$ and $\psi_k$ are eigenstates of $H_{\phi, \psi}^\balpha$
and $H_{\psi, \phi}^\balpha$ respectively, with the same
eigenvalues. Hence, in particular, since $\F_\phi$ and $\F_\psi$ are
bases for $\Hil$, $H_{\phi, \psi}^\balpha$ and $H_{\psi,
\phi}^\balpha$ are densely defined, closed, and $\left(H_{\phi,
\psi}^\balpha\right)^\dagger=H_{\psi, \phi}^\balphabar$, where
$\balphabar=\{\overline{\alpha}_n\}$. Moreover
 $H_{\phi, \psi}^\balpha$ is bounded if and only if $H_{\psi, \phi}^\balpha$ is bounded and this is true if and only if $\balpha$ is a bounded sequence.
In particular $H_{\phi, \psi}^{ \textbf{1}}=H_{\psi, \phi}^{ \textbf{1}}=\1$,
where ${\textbf{1}}$ is the sequence constantly equal to $1$. Moreover, the spectra of $H_{\phi, \psi}^\balpha$ and $H_{\psi, \phi}^\balpha$ are real if and only if each $\alpha_n$ is real.

\begin{rem}

It is worth to notice that the operators $H_{\phi, \psi}^\balpha$ and $H_{\psi, \phi}^\balpha$ introduced above looks quite similar to the so called {\em multipliers}, see \cite{bal,bal2}, which are operators of the form
$$
M_{m,\Phi,\psi}f=\sum_nm_n\left<\psi_n,f\right>\Phi_n,
$$
where $\{\Phi_n\}$ and $\{\Psi_n\}$ are fixed sequences in the Hilbert space, while $\{m_n\}$ is a sequence of scalars. The interest in \cite{bal,bal2} was on mathematical aspects of these multipliers, like, for instance, their unconditional convergence. On the other hand, we are more interested in the role of $\G$-quasi bases in the definition of our (physically-motivated) multipliers.

\end{rem}

\medskip

In a similar way, introducing a second sequence of complex numbers,
$\bbeta:=\{\beta_n\}$, the following operators can  be also defined:

$$ \left\{\begin{array}{l} D(S_\phi^{\bbeta})=\left\{f \in \Hil; \sum_{n=0}^\infty \beta_n \ip{\phi_n}{f}\phi_n \mbox{ exists in }\Hil \right\}\\
{ } \\
S_\phi^{\bbeta} f= \sum_{n=0}^\infty \beta_n \ip{\phi_n}{f}\phi_n,
\; f \in D(S_\phi^{\bbeta})
\end{array}\right.
$$
and
$$ \left\{\begin{array}{l} D(S_\psi^{\bbeta})=\left\{f \in \Hil; \sum_{n=0}^\infty \beta_n \ip{\psi_n}{f}\psi_n \mbox{ exists in }\Hil \right\}\\
{ } \\
S_\psi^{\bbeta} f= \sum_{n=0}^\infty \beta_n \ip{\psi_n}{f}\psi_n,
\; f \in D(S_\psi^{\bbeta}).
\end{array}\right.
$$
It is clear that
\begin{align}\label{2.5} &\D_\psi\subset D(S_\phi^{\bbeta})\quad\mbox{ and }\quad S_\phi^{\bbeta} \psi_k =\beta_k \phi_k, \; k\geq0\,\, ;\\
& \D_\phi\subset D(S_\psi^{\bbeta}) \quad\mbox{ and }\quad
S_\phi^{\bbeta} \phi_k =\beta_k \psi_k, \; k\geq0 .
\label{2.6}
\end{align}
Hence, in particular, $S_\phi^{\bbeta}$ and $S_\psi^{\bbeta}$ are
also densely defined, again due to the fact that $\F_\phi$ and
$\F_\psi$ are bases,  and, see \cite[Proposition 2.2]{bit}, they are
closed and self adjoint if each  $\beta_n\in\mathbb{R}$.
Furthermore, $S_\phi^{\bbeta}$ is bounded if and only if
$S_\psi^{\bbeta}$ is bounded and this is true if and only if
$\bbeta$ is a bounded sequence. Moreover, if $\beta_n=1$ for all
$n\geq0$, then $S_\phi:=S_\phi^{\textbf{1}}$ and
$S_\psi:=S_\psi^{\textbf{1}}$ are bounded positive self-adjoint
operators on $\Hil$ and they are inverses of each other\footnote{$S_\psi$ and $S_\phi$ are usually called {\em frame operators} in the literature of frames, or {\em metric operators} in quantum mechanics, while the operators $S_\phi^{\bbeta}$ and $S_\psi^{\bbeta}$ are particular cases of {\em Riesz bases multipliers}, \cite{bal2}.}, that is
$S_\phi= (S_\psi)^{-1}$. Also, {see \cite[Proposition
2.3]{bit} }$
 S_\psi H_{\phi, \psi}^\balpha= H_{\psi, \phi}^\balpha S_\psi =S_\psi^\balpha$, and
$ S_\phi H_{\psi, \phi}^\balpha= H_{\phi, \psi}^\balpha S_\phi=S_\phi^\balpha$, which are useful intertwining relations. More details on these operators and their domains can be found in \cite{bit}.

\begin{rem}

The relevance of intertwining relations and of the intertwining operators is discussed in detail, for instance, in \cite{bagint,kuru1,kuru2,samani}. They turn out to be very useful in the deduction of the eigenvalues and eigenvectors for certain pairs of Hamiltonians, obeying suitable intertwining relations. When this happens, the Hamiltonians turn out to be isospectral, and their eigenvectors are mapped ones into the others by the intertwining operator itself.

\end{rem}

\vspace{2mm}

Going back to the general case, let $\balpha=\{\alpha_n\}$ be a
sequence of complex numbers. We define the operators ${\sf h}_{\phi,
\psi}^\balpha$ and ${\sf h}_{\psi, \phi}^\balpha$ as follows:
\begin{equation}\label{2add2}\left\{\begin{array}{l}
 {\sf h}_{\phi, \psi}^\balpha =S_\psi^{1/2} H_{\phi, \psi}^\balpha S_\phi^{1/2},\\
{\sf h}_{\psi, \phi}^\balpha = S_\phi^{1/2} H_{\psi, \phi}^\balpha
S_\psi^{1/2}.
\end{array}\right. \end{equation}

 Then, see \cite[Proposition 2.4]{bit},  $D({\sf h}_{\phi,
\psi}^\balpha)=\{ S_\psi^{1/2}f; f \in D(H_{\phi, \psi}^\balpha)\}$,
$D({\sf h}_{\psi, \phi}^\balpha)=\{ S_\phi^{1/2}f; f \in D(H_{\psi,
\phi}^\balpha)\}$ and these are both dense in $\Hil$. Moreover
 $({\sf h}_{\phi, \psi}^\balpha)^* = {\sf h}_{\psi, \phi}^\balphabar$. Finally, and very important, if $\{\alpha_n\}\subset {\mb R}$, then ${\sf h}_{\phi, \psi}^\balpha$ is self-adjoint.

More results can be found in \cite[Section III]{bit}, where ladder
(i.e. raising and lowering) operators  are also introduced in terms
of the Riesz bases $\F_\phi$ and $\F_\psi$. In the rest of the paper
we will discuss, both from a general point of view and considering
particular examples, how much of the above structure can be
recovered when the biorthogonal sets are no longer Riesz bases. In
particular, we will see under what conditions some relevant
operators we are going to introduce are, in fact, densely defined.

\section{Some examples when orthonormality is lost}\label{sect3}

In this section we discuss how, and to which extent, loosing orthonormality  can give rise to certain mathematical and physical consequences which do not appear whenever one uses o.n. bases.
Of course, we will also give up the assumption that the set of vectors we consider is a Riesz basis, since this case is completely under control.

For making these differences evident, we discuss in some details three examples, the last one being directly physically motivated, whereas the first two are interesting mainly from a mathematical point of view, but not only.

\subsection{First example}\label{sectFE}

Let $\F_e:=\{e_n,\,n\geq1\}$ be an o.n. basis of a Hilbert space $\Hil$, and let us introduce the set $\F_x=\{x_n=\sum_{k=1}^n\frac{1}{k}\,e_k,\,n\geq1\}$, see \cite{heil}. Of course we can write $x_n$ as follows: $x_1=e_1$, and $x_n=x_{n-1}+\frac{1}{n}\,e_n$, $n\geq2$.
It is clear that $x_n\in\Hil$ for all $n\geq1$. Also, $\F_x$ is complete in $\Hil$. In fact, $\left<f,x_n\right>=0$ for all $n\geq1$
implies that $\left<f,e_n\right>=0$ for all $n\geq1$ as well, so that $f=0$, necessarily. Let us now introduce $\G$ as the linear span of the $e_n$'s. This set is dense in $\Hil$. It is possible to see that $\F_x$ is a basis for $\G$, \cite{bagbook}, but not for $\Hil$. In particular, if on one hand it is easy to check that each vector $f=\sum_{k=1}^N c_ke_k$, $N<\infty$, can be written as a linear combination of the $x_n$'s, on the other hand it is also possible to check that $h:=\sum_{k=1}^\infty\frac{1}{k}\,e_k$,
which is a non zero vector in $\Hil$, cannot be written as $\sum_{k=1}^\infty\alpha_k\,x_k$, for any choice of the complex numbers $\alpha_k$.
In fact, assume that this is possible. Then,  we should have
$$\left\{
\begin{array}{ll}
\left<h,e_1\right>=1, \mbox{ and } \left<h,e_1\right>=\sum_{k=1}^\infty\alpha_k \,\,\qquad \Rightarrow\,\, \qquad \sum_{k=1}^\infty\alpha_k=1\\
\left<h,e_2\right>=\frac{1}{2}, \mbox{ and } \left<h,e_2\right>=\frac{1}{2} \sum_{k=2}^\infty\alpha_k \qquad \Rightarrow \qquad \sum_{k=2}^\infty\alpha_k=1\\
\left<h,e_3\right>=\frac{1}{3}, \mbox{ and } \left<h,e_3\right>=\frac{1}{3} \sum_{k=3}^\infty\alpha_k \qquad \Rightarrow \qquad \sum_{k=3}^\infty\alpha_k=1,\\
\end{array}%
\right.
$$
and so on. Hence, we should have $\alpha_1=\alpha_2=\alpha_3=\ldots=0$, which implies that $h=0$, which is absurd.

The set which is biorthogonal to $\F_x$ is
 $\F_y:=\{y_n=ne_n-(n+1)e_{n+1}, \,n\geq1\}$. Indeed we can prove that $\left<x_k,y_l\right>=\delta_{k,l}$, $\forall\,k,l\in\Bbb N$.
 $\F_y$ is  complete in $\G$, but not in $\Hil$. Indeed, let $f\in\G$ be orthogonal to all the $y_n$'s. The vector $f$ can be written as $f=\sum_{k=1}^Nc_k e_k$, for some finite $N$. Of course, $c_k=\left<e_k,f\right>$.
 Now, condition $\left<f,y_1\right>=0$ implies that $\left<f,e_1\right>=2\left<f,e_2\right>$. Also, from $\left<f,y_2\right>=0$,  it follows that $2\left<f,e_2\right>=3\left<f,e_3\right>$, and so on. However, since $\left<f,y_N\right>=0$, we deduce that  $N\left<f,e_N\right>=(N+1)\left<f,e_{N+1}\right>=0$. Then, $\left<f,e_k\right>=0$ for all $k=1,2,\ldots,N$, so that $f=0$. Therefore, as stated, $\F_y$ is complete in $\G$. To prove that $\F_y$ is not complete in $\Hil$, it is sufficient to observe that the  vector $h$, already introduced,  is orthogonal to all the $y_n$'s, but it is not zero.

Contrarily to $\F_x$, it can be shown that the set $\F_y$ is not a basis for $\G$.  Therefore, a fortiori, $\F_y$ is not a basis for $\Hil$. We can further prove that, even though $\F_x$ and $\F_y$ are not quasi-bases,  they are still $\G$-quasi bases. To prove these claims we first observe that, taking again $h$ as above, on one hand we have $\|h\|^2=\left<h,h\right>=\sum_{k=1}^\infty\frac{
 1}{k^2}=\frac{\pi^2}{6}$, whereas on the other hand we have
 $\sum_{n=1}^\infty \left<h,x_n\right>\left<y_n,h\right>=0$, since $\left<y_n,h\right>=0$ for all $n$. Hence, at least for this $h$, $\left<h,h\right>\neq \sum_{n=1}^\infty \left<h,x_n\right>\left<y_n,h\right>$, and our first assertion is proved. Of course, this is in agreement with the fact that $\F_y$ is not complete in $\Hil$, as it should be, if they were quasi-bases. However, they are $\G$-quasi bases because, taking $f$ and $g$ in $\G$,
 it is just a  straightforward computation to check that
$$
\sum_{n=1}^\infty \left<f,x_n\right>\left<y_n,g\right>=\sum_{n=1}^\infty \left<f,y_n\right>\left<x_n,g\right>=\left<f,g\right>,
$$
which is what we had to prove.

\subsection{Second example}\label{sectSE}
Let, as before, $\F_e:=\{e_n,\,n\geq1\}$ be an o.n. basis of a Hilbert space $\Hil$, $\G$  its linear span, and consider the sets

$$
\F_x=\left\{x_n=\sum_{k=1}^n(-1)^{n+k}e_k,\,n\geq1\right\} \quad\mbox{ and } \quad \F_y=\{y_n=e_n+e_{n+1},\,n\geq1\}.
$$
Then $\left<y_n,x_k\right>=\delta_{n,k}$ for all $k,n\geq1$, \cite{chri}.
$\F_x$ is complete in $\Hil$ and, interestingly enough, it is a basis for $\G$.  Indeed, let $f\in\G$, then $f$ can be written as a finite linear combination of the $e_n$'s. Let us assume, to begin with, that $f=\sum_{k=1}^{2N}c_ke_k$.
 Now we will show that there exist $\alpha_j\in\Bbb C$, $j=1,2,\ldots,2N$ such that $f=\sum_{k=1}^{2N}\alpha_kx_k$. In fact, equating the two expansions, we deduce that
$$
\underline c_{ 2N}=T_{2N} \underline\alpha_{ 2N},
$$
where
$$
\underline c_{ 2N}=\left(
                     \begin{array}{c}
                       c_1 \\
                       c_2 \\
                       c_3 \\
                       . \\
                       . \\
                       c_{2N-1} \\
                       c_{2N} \\
                     \end{array}
                   \right),\quad
                   \underline \alpha_{ 2N}=\left(
                     \begin{array}{c}
                       \alpha_1 \\
                       \alpha_2 \\
                       \alpha_3 \\
                       . \\
                       . \\
                       \alpha_{2N-1} \\
                       \alpha_{2N} \\
                     \end{array}
                   \right),\quad    T_{2N}=\left(
                                             \begin{array}{ccccccc}
                                               1 & -1 & 1 & . & . & 1 & -1 \\
                                               0 & 1 & -1 & . & . & -1 & 1 \\
                                               0 & 0 & 1 & . & . & 1 & -1 \\
                                               . & . & . & . & . & . & . \\
                                               . & . & . & . & . & . & . \\
                                               0 & 0 & 0 & . & . & 1 & -1 \\
                                               0 & 0 & 0 & 0 & 0 & 0 & 1 \\
                                             \end{array}
                                           \right).
$$

\vspace{2mm}

Since $\det(T_{2N})=1$ for all $N$, $T_{2N}^{-1}$ surely exists, and therefore $\underline\alpha_{ 2N}=T^{-1}_{2N}\underline c_{ 2N}$.
Then we recover the coefficients $\alpha_n$'s of the expansion of $f$ in terms of $x_n$'s. Exactly the same conclusion we find if $f=\sum_{k=1}^{2N+1}c_ke_k$.
Hence, $\F_x$ is a basis for $\G$, as stated. What is more, one also can  prove that $\F_x$ is, in fact, a basis also for $\Hil$.
This is not particularly surprising, since the determinant of $T_{2N}$ (and of $T_{2N+1}$), i.e. the possibility of inverting those matrices, is independent of $N$.
{A consequence is the completeness of $\F_x$ in $\Hil$.}
As for
  $\F_y$, this set is complete in $\Hil$.
 Indeed if  $f\in\Hil$ is orthogonal to every $y_n$, then we easily deduce that $\left|\left<f,e_j\right>\right|$ is independent of $j$ so that
 $\|f\|^2=\sum_{j=1}^\infty\left|\left<f,e_j\right>\right|^2$ can be finite { if} only if $\left<f,e_j\right>=0$ for all $j$.
  Hence $f=0$. However, it turns out that $\F_y$ is not a basis for $\Hil$, \cite{chri}.
  For instance, one can notice that $e_1$ cannot be written in terms of $y_n$'s.
  This incidentally also implies  that, as in the previous example, $\F_y$ cannot even  be a basis for $\G$, since $e_1\in\G$.

It is now interesting to see that, even though $\F_y$ is not a basis for $\G$, $\F_x$ and $\F_y$ are  $\G$-quasi bases. This can be proved with a direct computation:
$$
\left<f,g\right>=\sum_{n=1}^\infty \left<f,y_n\right>\left<x_n,g\right>=\sum_{n=1}^\infty \left<f,x_n\right>\left<y_n,g\right>=\sum_{n=1}^{Min(N,M)}\overline{f_n}\,g_n,
$$
for all $f,g\in\G$ such that $f=\sum_{n=1}^{N}f_ne_n$, $g=\sum_{n=1}^{M}g_ne_n$ with $f_i,g_j\in\mathbb{C}$, $i=1,\cdots,N$ and $j=1,\cdots, M$. It is evident, therefore, that as in Section \ref{sectFE}, also here it is possible to recover a (weak) resolution of the identity, even though we are working with biorthogonal sets which are not bases.

\subsection{Third example, with Hamiltonians}\label{sectTE}

This example  is, in a certain sense, more  physically-motivated, since it is directly linked to a quantum harmonic oscillator. Moreover, exactly for this reason, it is also relevant because it will suggest how to enrich our previous examples by adding some physical insight to their original mathematical aspects. We start defining the following functions of $\Sc(\Bbb R)$, the set of $C^\infty$, fast decreasing, functions:
$$
x_n(x)=\frac{1}{\sqrt{2^n\,n!\sqrt{\pi}}}\,H_n(x)\,e^{-\frac{x^2}{4}},\qquad
y_n(x)=\frac{1}{\sqrt{2^n\,n!\sqrt{\pi}}}\,H_n(x)\,e^{-\frac{3x^2}{4}},
$$
for $n=0,1,2,3,\ldots$. Here $H_n(x)$ is the $n$-th Hermite
polynomial. It is easy to check that
$$\left<x_m,y_n\right>=\left<e_m,e_n\right>=\delta_{m,n},$$ where
$e_n(x)=\frac{1}{\sqrt{2^n\,n!\sqrt{\pi}}}\,H_n(x)\,e^{-\frac{x^2}{2}}$
is the $n$-th function of the set $\F_e=\{e_n(x)\}$, which is the
well known o.n. basis of $\Lc^2(\Bbb R)$ consisting of eigenvectors
of the self-adjoint Hamiltonian of the
quantum harmonic oscillator, \cite{messiah}.  We continue to call $\G$ the linear span of the $e_n$'s.
We have
$$
h\,e_n=\left(n+\frac{1}{2}\right)e_n,
$$
$n=0,1,2,3,\ldots$. It is clear that, for all such $n$'s,
$y_n=Te_n$, $x_n=T^{-1}e_n$, where $T$ is the multiplication
operator defined as $(Tf)(x)=e^{-\frac{x^2}{4}}f(x)$, for all
$f(x)\in\Lc^2(\Bbb R)$. Of course, $T$ is bounded and self-adjoint.
It is also invertible, but its inverse is unbounded. However,
$T^{-1}$ is densely defined since its domain, $D(T^{-1})$, contains e.g. the set $D(\Bbb R)$ of the compactly supported
$C^\infty$-functions, which is dense in $\Lc^2(\Bbb R)$. Of course, this is a proper inclusion since each $e_n(x)$  belongs to
$D(T^{-1})$, but does not belong to $D(\Bbb R)$, for any $n$.

A standard argument, see \cite{kol}, shows that $\F_y=\{y_n(x), \,n\geq0\}$
and $\F_x=\{x_n(x), \,n\geq0\}$ are both complete in $\Lc^2(\Bbb R)$. They are also
$D(\Bbb R)$-quasi bases: in fact, let $f,g\in D(\Bbb R)$, then
$$
\left<f,g\right>=\left<T^{-1}Tf,g\right>=\left<Tf,T^{-1}g\right>=\sum_{n=0}^\infty\left<Tf,e_n\right>\left<e_n,T^{-1}g\right>=
$$
$$
=\sum_{n=0}^\infty\left<f,Te_n\right>\left<T^{-1}e_n,g\right>=\sum_{n=0}^\infty\left<f,y_n\right>\left<x_n,g\right>.
$$
Analogously we can check that
$\left<f,g\right>=\sum_{n=0}^\infty\left<f,x_n\right>\left<y_n,g\right>$.
On the other hand, $\F_y$ and $\F_x$ are not bases for $\Lc^2(\Bbb
R)$ because,  {as we have already recalled in Remark \ref{rem proj}, a necessary condition for $\F_y$ and $\F_x$ to be bases is that}  $\sup_n\|y_n\|\|x_n\|<\infty$,  but this is
not the case. In fact, (see the integral 2.20.16. nr. 2 in \cite{prud}), we get
$$
\|y_n\|^2=\sqrt{\frac{2}{3}}\,\frac{2}{3^{n/2}}\,P_n\left(\frac{2}{\sqrt{3}}\right),
\qquad
\|x_n\|^2=\sqrt{2}\,3^{n/2}\,P_n\left(\frac{2}{\sqrt{3}}\right),
$$
where $P_n(x)$ is the $n$-th Legendre polynomial. Now, using the asymptotic behavior in $n$ of these polynomials, for $x>1$, see \cite{szego},  we see that, for large $n$,
$$
\|y_n\|^2\|x_n\|^2\simeq \frac{2}{\sqrt{3}\,\pi}\,\frac{3^n}{n},
$$
which diverges with $n$. Hence $\sup_n\|y_n\|\|x_n\|=\infty$. Therefore, neither $\F_y$ nor $\F_x$ can be bases for $\Lc^2(\Bbb R)$. Nevertheless, they have interesting physical properties, since they are eigenstates of the following manifestly non self-adjoint operators $$H_1=\frac{1}{2}\left[-\frac{d^2}{dx^2}-x\frac{d}{dx}+\frac{1}{2}\left(\frac{3x^2}{2}-1\right)\right]$$ and
$$H_2=\frac{1}{2}\left[-\frac{d^2}{dx^2}+x\frac{d}{dx}+\frac{1}{2}\left(\frac{3x^2}{2}+1\right)\right].$$
In fact, since the Hermite polynomial $H_n(x)$ satisfies the differential equation $H_n''(x)-2xH_n'(x)+2nH_n(x)=0$, for all $n\in \mathbb{N}$, a direct computation shows that
$$
H_1\,y_n=E_ny_n,\qquad H_2\,x_n=E_nx_n,
$$
where $E_n=n+\frac{1}{2}$ and $n=0,1,2,3,\ldots$.

Then the three Hamiltonians $H_1$, $H_2$ and $h$ are all isospectrals. This is in agreement with the fact that they are all ({\em strongly}) similar, i.e. that, for every $f\in D(\Bbb R)$, we have $ThT^{-1}f=H_1f$ and $T^{-1}hTf=H_2f$.
This also suggests that, for all those functions, $H_2f=H_1^\dagger f$, as one also can explicitly check.

Quite often, when isospectral Hamiltonians appear related by some
(possibly extended, as in this case) similarity operator, it is a
standard procedure to introduce the following operators:
$$
D(S_x)=\left\{g\in\Hil:\,\sum_{n=0}^\infty\left<x_n,g\right>x_n
\,\mbox{ exists in }\Hil\right\}, \, D(S_y)=\left\{f\in\Hil:\,\sum_{n=0}^\infty\left<y_n,f\right>y_n \,\mbox{ exists in
}\Hil\right\},
$$
and $S_x
g=\sum_{n=0}^\infty\left<x_n,g\right>x_n$, $S_y f=\sum_{n=0}^\infty\left<y_n,f\right>y_n$, for $g\in
D(S_x)$ and $f\in D(S_y)$. Using the continuity of $T$, it is clear that $S_y$ is
everywhere defined and that $S_y=T^2$. In fact, taken $f\in D(S_y)$,
we have
$$
S_y
f=\sum_{n=0}^\infty\left<y_n,f\right>y_n=\sum_{n=0}^\infty\left<Te_n,f\right>Te_n=T\left(\sum_{n=0}^\infty\left<e_n,T\,f\right>e_n\right)=T(Tf).
$$

Of course, since $T^2$ is bounded, this equality can be extended to the whole $\Hil$. Notice that, in particular, $S_yx_n=y_n$ for all $n$.

Of course, such a simple argument does not hold for $S_x$. This is
because $T^{-1}$ is unbounded and, therefore, not continuous on $\Hil$.
However, we can still prove that $S_x=T^{-2}$, but in a {\em weak
form}, i.e. we can prove that
$$
\left<\xi,\left(T^{-2}-S_x\right)g\right>=0,
$$
for all $\xi\in D(\Bbb R)$, and for all $g\in D(\Bbb R)\cap D(S_x)$,
which we assume here to be a sufficiently rich set. This is a
reasonable assumption since, recalling that $S_xy_n=x_n$ for all
$n$, we see that the linear span of the $y_n$'s, $\D_y$, is a subset
of $D(S_x)$. Moreover, we observe that $\D_y$ is the image of the
dense set $\G$, via the bounded operator $T$. Since $\F_y$ is
complete, its linear span $\D_y$ is dense in $\Hil$.  So, $D(S_x)$
is surely a rather rich set. Of course, what is not evident is that
$D(\Bbb R)\cap D(S_x)$ is also rich, and will only be assumed here.

\section{Hamiltonians defined by $\G$-quasi bases: theory and examples}

The examples  in literature, together with those introduced in
Section \ref{sect3}, show that there exist several biorthogonal sets
of vectors with different characteristics, most of which are harder
to deal with than o.n bases, but which are still interesting (both
in mathematics and in physics) and sufficiently well behaved. In
what follows, somehow inspired by what we have done in Section
\ref{sectTE}, we discuss more mathematical and physical facts
related to the biorthogonal sets considered in Sections \ref{sectFE}
and \ref{sectSE}.

In particular, we use  the general results on $\F_x$ and $\F_y$ deduced in those sections to define some  manifestly non self-adjoint operators, which we still call Hamiltonians,  having $x_n$ and $y_n$ as eigenstates. In this way we will significantly extend what was first proposed in \cite{bit}, starting from biorthogonal Riesz bases and then in \cite{fg,bellom} in the more general settings of rigged Hilbert spaces. Interestingly enough, we will see that  many of the results deduced in \cite{bit} also can  be recovered here, in a situation in which $\F_x$ and $\F_y$ are not even bases, but just $\G$-quasi bases, with $\G$ some dense subset of $\Hil$.

To make our results model-independent, we devote the first part of this section to discuss some general results which extend those in \cite{bit} (see also Section \ref{sectRB}) to the present settings. In the remaining part of the section, we will go back to the particular choices considered in Sections \ref{sectFE} and \ref{sectSE}, and we will see what can be said in those cases.

\subsection{Some general results}\label{sectSGR}

Let ${\balpha}:=\{\alpha_n, \,n\in \Bbb N\}$ be a sequence of
complex numbers, $\F_x$ and $\F_y$ biorthogonal $\G$-quasi bases for some dense subset $\G\subset\Hil$, and $H_{x, y}^\balpha  $ and $H_{y,x}^\balpha  $ two operators defined
as follows:
$$D(H_{x, y}^\balpha  )=\left\{f\in\Hil:\,  \sum_{n=1}^\infty \alpha_n\left<y_n,f\right>\,x_n \mbox{ exists in }\Hil\right\},$$
 $$D(H_{y,x}^\balpha  )=\left\{g\in\Hil:\, \sum_{n=1}^\infty \alpha_n\left<x_n,g\right>\,y_n\mbox{ exists in }\Hil\right\},$$
and
\begin{equation}\label{23}
H_{x, y}^\balpha   f:=\sum_{n=1}^\infty \alpha_n\left<y_n,f\right>\,x_n, \qquad
H_{y,x}^\balpha   g:=\sum_{n=1}^\infty \alpha_n\left<x_n,g\right>\,y_n,
\end{equation}
for all $f\in D(H_{x, y}^\balpha  )$ and $g\in D(H_{y,x}^\balpha)$. In analogy with what has been discussed in Section \ref{sectRB}, we easily see that

\begin{align}\label{2.1} &\D_y:= \mbox{span}\{y_n\} \subseteq D(H_{y,x}^\balpha  )\quad
 \D_x:= \mbox{span}\{x_n\} \subseteq D(H_{x, y}^\balpha  );\\
& H_{y,x}^\balpha   y_k =\alpha_k y_k, \; \hspace{2.5cm}
 H_{x, y}^\balpha   x_k =\alpha_k x_k, \quad k\geq1. \label{2.2}
\end{align}
 Therefore, the $x_n$'s
and the $y_n$'s are eigenstates respectively of $ H_{x, y}^\balpha$
and $ H_{y, x}^\balpha$, and the complex numbers $\alpha_n$'s are
their (common) eigenvalues. So, from this point of view, not much
has changed with respect to what we have summarized in Section
\ref{sectRB}. What is really different here is that, since neither
$\F_x$ nor $\F_y$ are (Riesz) bases, neither $H_{y, x}^\balpha $ nor
$H_{x, y}^\balpha  $ need to be densely defined, in general.
However, when this is true, more can be deduced\footnote{It is clear that having densely defined Hamiltonians does not necessarily imply that their eigenvectors do form  Riesz bases. In fact, in Section \ref{sectTE} we have seen an explicit example in which $H_1$ and $H_2$ are densely defined even though their eigenstates are not Riesz bases.}.  We will assume, in the
remaining part of this section, that the operators  $H_{y,
x}^\balpha$ and $H_{x, y}^\balpha  $ are densely defined. For this reason we will
see that suitable conditions exist which make these assumptions
verified in concrete situations, as the examples in Sections
\ref{Sect: BFE} and \ref{bts} show.

\begin{prop}
Let $\{\alpha_n\}\subset \Bbb R$, then $(H_{y,x}^\balpha)^\dagger\supseteq
H_{x, y}^\balpha  $.\end{prop}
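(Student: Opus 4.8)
The plan is to show that $H_{x,y}^{\balpha}$ is contained in the adjoint of $H_{y,x}^{\balpha}$ by verifying the defining relation of the adjoint: for every $f \in D(H_{x,y}^{\balpha})$ I must produce a vector $h \in \Hil$ such that $\ip{H_{y,x}^{\balpha} g}{f} = \ip{g}{h}$ for all $g \in D(H_{y,x}^{\balpha})$, and then check that $h = H_{x,y}^{\balpha} f$. The natural candidate is of course $h = H_{x,y}^{\balpha} f$ itself, so the whole content is the identity
\begin{equation}\label{eq:adjcheck}
\ip{H_{y,x}^{\balpha} g}{f} = \ip{g}{H_{x,y}^{\balpha} f}, \qquad g \in D(H_{y,x}^{\balpha}),\ f \in D(H_{x,y}^{\balpha}).
\end{equation}

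The first step is to write out both sides using the series definitions \eqref{23}. Since $g \in D(H_{y,x}^{\balpha})$, the series $\sum_n \alpha_n \ip{x_n}{g} y_n$ converges in $\Hil$, so by continuity of the inner product
\[
\ip{H_{y,x}^{\balpha} g}{f} = \sum_{n=1}^\infty \overline{\alpha_n \ip{x_n}{g}} \, \ip{y_n}{f} = \sum_{n=1}^\infty \overline{\alpha_n}\, \ip{g}{x_n} \, \ip{y_n}{f},
\]
and since $\{\alpha_n\} \subset \Bbb R$ this equals $\sum_n \alpha_n \ip{g}{x_n}\ip{y_n}{f}$. Symmetrically, since $f \in D(H_{x,y}^{\balpha})$, the series $\sum_n \alpha_n \ip{y_n}{f} x_n$ converges in $\Hil$ and
\[
\ip{g}{H_{x,y}^{\balpha} f} = \sum_{n=1}^\infty \alpha_n \ip{y_n}{f}\, \ip{g}{x_n}.
\]
The two resulting series have identical terms, so \eqref{eq:adjcheck} holds; this identity, together with $H_{x,y}^{\balpha} f \in \Hil$, is exactly the statement that $f \in D\big((H_{y,x}^{\balpha})^\dagger\big)$ and $(H_{y,x}^{\balpha})^\dagger f = H_{x,y}^{\balpha} f$, which gives the claimed inclusion $H_{x,y}^{\balpha} \subseteq (H_{y,x}^{\balpha})^\dagger$.

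The one point requiring care — and the place where a more pedantic argument is needed — is the interchange of conjugation with the infinite sum in the first display, and the fact that $\ip{\cdot}{f}$ may be applied termwise to the series defining $H_{y,x}^{\balpha}g$. Both are legitimate precisely because the series converge \emph{in the norm of }$\Hil$ (the unconditional convergence is assumed throughout the paper), so that $\ip{\cdot}{f}$ and $\ip{g}{\cdot}$, being continuous linear/antilinear functionals, pass inside the sum. Note that $H_{y,x}^{\balpha}$ being densely defined is what makes $(H_{y,x}^{\balpha})^\dagger$ well defined as an operator in the first place; this is part of the standing assumption of the subsection, so no extra hypothesis is needed here. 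There is no real obstacle: the reality of the $\alpha_n$'s is used only to remove the conjugation, and everything else is a two-line manipulation of convergent series. (One should resist the temptation to claim equality rather than inclusion: without knowing that $\F_x,\F_y$ are bases one cannot control $D\big((H_{y,x}^{\balpha})^\dagger\big)$ from above, which is why the statement is phrased with $\supseteq$.)
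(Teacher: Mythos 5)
Your proof is correct and follows essentially the same route as the paper: both verify the defining relation of the adjoint by expanding $\ip{g}{H_{x,y}^{\balpha}f}$ and $\ip{H_{y,x}^{\balpha}g}{f}$ termwise via norm convergence and continuity of the inner product, with the reality of the $\alpha_n$'s absorbing the conjugation. Your explicit remarks on where reality is used and on the need for $H_{y,x}^{\balpha}$ to be densely defined are points the paper leaves implicit, but the argument is the same.
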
\begin{proof} We have to check that each $h\in
D(H_{x, y}^\balpha )$ also belongs to $D((H_{y,x}^\balpha )^\dagger)$, and that for such
$h$'s, $(H_{y,x}^\balpha)^\dagger h=H_{x, y}^\balpha h$. If $h\in
D(H_{x, y}^\balpha )$, the series $\sum_{n=1}^\infty
\alpha_n\left<y_n,h\right>\,x_n$ is norm convergent to $H_{x, y}^\balpha h$.
Hence, using the continuity of the scalar product, we have
$$
\left<f,H_{x, y}^\balpha h\right>=\left<f,\sum_{n=1}^\infty
\alpha_n\left<y_n,h\right>\,x_n\right>=\sum_{n=1}^\infty
\alpha_n\left<y_n,h\right>\left<f,x_n\right>=
$$
$$
=\left<\sum_{n=1}^\infty
\alpha_n\left<x_n,f\right>\,y_n,h\right>=\left<H_{y,x}^\balpha f,h\right>,
$$
for all $f\in D(H_{y,x}^\balpha )$. This means that $h\in D((H_{y,x}^\balpha)^\dagger)$ and that $(H_{y,x}^\balpha)^\dagger h=H_{x, y}^\balpha  h$, which is what we had to prove.
\end{proof}

\begin{rem} If $\{\alpha_n\}\subset \Bbb R$, in some cases the two operators $(H_{y,x}^\balpha)^\dagger$ and $H_{x, y}^\balpha $ do coincide. One of these cases is when the sets $\F_x$ and $\F_y$ are Riesz bases, see e.g. \cite{bit}. Another, even simpler, situation is when $H_{x, y}^\balpha $ and $H_{y,x}^\balpha $ are bounded operators, which is the case, for instance, if the sequence $\{|\alpha_n|\|x_n\|\|y_n\|\}$ belongs to $l^1(\Bbb R)$.
\end{rem}

Let us define now, in analogy e.g. with \cite{bit}, the following lowering and raising operators which can be used
to factorize the  Hamiltonians introduced before. To this aim, we assume that the
sequence $\balpha$ satisfies the following condition: $0=\alpha_1\leq\alpha_2\leq\ldots$. Then we introduce the following operators:
$$ \left\{\begin{array}{l} D(A_{x,y})=\left\{f\in \Hil; \sum_{n=2}^\infty
\sqrt{\alpha_n}\ip{y_n}{f}x_{n-1} \mbox{ exists in }\Hil \right\}\\
{ } \\
A_{x,y} f= \sum_{n=2}^\infty \sqrt{\alpha_n}\ip{y_n}{f}x_{n-1},
\qquad f \in D(A_{x,y})
\end{array}\right.
$$

$$ \left\{\begin{array}{l} D(A_{y,x})=\left\{f  \in \Hil; \sum_{n=2}^\infty \sqrt{\alpha_n}{\ip{x_n}{f }}y_{n-1} \mbox{ exists in }\Hil \right\}\\
{ } \\
A_{y,x}f = \sum_{n=2}^\infty \sqrt{\alpha_n}{\ip{x_n}{f }}y_{n-1},
\qquad f  \in D(A_{y,x})
\end{array}\right.
$$

$$ \left\{\begin{array}{l} D(B_{x,y} )=\left\{f\in \Hil; \sum_{n=1}^\infty \sqrt{\alpha_{n+1}}\ip{y_n}{f}x_{n+1} \mbox{ exists in }\Hil \right\}\\
{ } \\
B_{x,y} f= \sum_{n=1}^\infty \sqrt{\alpha_{n+1}}\ip{y_n}{f}x_{n+1},
\qquad f \in D(B_{x,y})
\end{array}\right.
$$

$$ \left\{\begin{array}{l} D(B_{y,x})=\left\{f  \in \Hil; \sum_{n=1}^\infty
\sqrt{\alpha_{n+1}}\ip{x_n}{f }y_{n+1}\mbox{ exists in }\Hil \right\}\\
{ } \\
B_{y,x}f = \sum_{n=1}^\infty \sqrt{\alpha_{n+1}}\ip{x_n}{f }y_{n+1},
\qquad f  \in D(B_{y,x}).
\end{array}\right.
$$

These operators behave as some sort of ladder operators. In fact we have, for instance: $x_n\in D(A_{x,y})$ and $A_{x,y}x_n=\sqrt{\alpha_n}x_{n-1}$, for all $n\geq2$. Also, $x_n\in D(B_{x,y})$ and $B_{x,y}x_n=\sqrt{\alpha_{n+1}}x_{n+1}$, for all $n\geq1$, and so on.

From now on we will assume that (at least) $\F_x$ is a basis for
some dense subset $\G\subseteq\Hil$. This hypothesis holds true, for
instance, in Sections \ref{sectFE} and \ref{sectSE}, where, we
recall, $\G$ is the linear span of the vectors $e_n$ of a given o.n.
basis. Hence  $A_{x,y}$, $B_{x,y}$ and $H_{x, y}^\balpha $ are all densely defined.  Of course, similar results also can be  deduced  for
$A_{y,x}$, $B_{y,x}$ and for $H_{y,x}^\balpha $, at least if $\F_y$
is a basis for $\G$. However, this is not what happens in our
examples, unless we impose some extra conditions to $\balpha$, as we
will see later. When these extra conditions are satisfied, all our
operators turn out to be densely defined.

 Now we can prove the following Proposition, which allows us to factorize both $H_{x, y}^\balpha $ and $H_{y,x}^\balpha $ respectively on $\D_x$ and $\D_y$.

\begin{prop}\label{propfact} Let $\F_x$ and $\F_y$ be biorthogonal sets. Assume that the sequence
$\balpha$ satisfies the following condition:
$0=\alpha_1\leq\alpha_2\leq\ldots$, then the following statements
hold.
\begin{itemize}
  \item[$i)$] The operators $H_{x, y}^\balpha $ and $B_{x,y}A_{x,y}$
coincide on $\D_x$;
  \item[$ii)$] the operators $H_{y,x}^\balpha $ and $B_{y,x}A_{y,x}$ coincide on $\D_y$.
\end{itemize}
\end{prop}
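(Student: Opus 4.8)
The plan is to verify each of the two identities by testing it on the vectors that span the relevant domain: since $\D_x=\mbox{span}\{x_n\}$ and $\D_y=\mbox{span}\{y_n\}$, an operator identity on $\D_x$ (resp. $\D_y$) follows from the identity on every $x_k$ (resp. $y_k$) together with linearity. So the whole proof reduces to a computation on basis vectors.

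First I would record the action of the ladder operators on these vectors, using biorthogonality $\left<y_n,x_k\right>=\delta_{n,k}$. This makes the defining series for $A_{x,y}x_k$ collapse to a single term: $A_{x,y}x_k=\sqrt{\alpha_k}\,x_{k-1}$ for $k\geq2$, while for $k=1$ every term vanishes, so $A_{x,y}x_1=0$; note that $\sqrt{\alpha_k}$ is meaningful precisely because the hypothesis $0=\alpha_1\leq\alpha_2\leq\cdots$ forces $\alpha_k\geq0$. In the same way $B_{x,y}x_m=\sqrt{\alpha_{m+1}}\,x_{m+1}$ for all $m\geq1$. In particular $x_k\in D(A_{x,y})$, and $A_{x,y}x_k$ lies in $\D_x$, which by the inclusion in \eqref{2.1} is contained in $D(B_{x,y})$; hence the composition $B_{x,y}A_{x,y}$ is genuinely defined on all of $\D_x$.

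Then I would compose. For $k\geq2$,
$$B_{x,y}A_{x,y}\,x_k=\sqrt{\alpha_k}\,B_{x,y}x_{k-1}=\sqrt{\alpha_k}\,\sqrt{\alpha_k}\,x_k=\alpha_k x_k,$$
and for $k=1$ we get $B_{x,y}A_{x,y}\,x_1=B_{x,y}0=0=\alpha_1 x_1$ since $\alpha_1=0$. On the other hand $H_{x, y}^\balpha x_k=\alpha_k x_k$ by \eqref{2.2}. Thus the two operators agree on every $x_k$, hence on $\D_x$ by linearity, which is statement $i)$. Statement $ii)$ is obtained by the identical argument with $\F_x$ and $\F_y$ interchanged: using $\left<x_n,y_k\right>=\delta_{n,k}$ one gets $A_{y,x}y_k=\sqrt{\alpha_k}\,y_{k-1}$ (and $A_{y,x}y_1=0$), $B_{y,x}y_m=\sqrt{\alpha_{m+1}}\,y_{m+1}$, and $H_{y,x}^\balpha y_k=\alpha_k y_k$, so $B_{y,x}A_{y,x}y_k=\alpha_k y_k$ on $\D_y$.

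I do not expect a genuine obstacle here: the entire content is the telescoping of the index shifts $n\mapsto n-1$ followed by $n\mapsto n+1$, plus the bookkeeping of the boundary index $k=1$ where $\alpha_1=0$ kills both sides. The only point that deserves an explicit word is the domain check — that $A_{x,y}$ maps $\D_x$ into $\D_x\subseteq D(B_{x,y})$ (and likewise on the $y$-side) — so that the composed operators are well defined on the spans in the first place.
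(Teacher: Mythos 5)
Your proof is correct and takes essentially the same route as the paper: the paper computes $\ip{y_m}{A_{x,y}f}$ for a general $f\in\D_x$ and re-indexes the resulting series using $\alpha_1=0$, while you evaluate on the individual $x_k$ and invoke linearity of the (series-defined, hence linear on their domains) operators — the same telescoping computation in slightly different packaging. One cosmetic remark: the inclusion $\D_x\subseteq D(B_{x,y})$ does not follow from \eqref{2.1}, which concerns $D(H_{x, y}^\balpha)$, but rather from the observation you already make that the defining series for $B_{x,y}x_k$ collapses to a single term by biorthogonality.
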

\begin{proof}
$i)$ Let $f\in\D_x$. For every $ m\in\mathbb{N}$  we have
$$\ip{y_m}{A_{x,y}f}=\ip{y_m}{\sum_{n=2}^\infty \sqrt{\alpha_n}\ip{y_n}{f}x_{n-1}}=\sum_{n=2}^\infty \sqrt{\alpha_n}\ip{y_n}{f}\ip{y_m}{x_{n-1}}=\sqrt{\alpha_{m+1}}\ip{y_{m+1}}{f}$$ because of the continuity of the inner product. Then, recalling that $\alpha_1=0$, we have
\begin{eqnarray*}
  B_{x,y}(A_{x,y}f) &=& \sum_{m=1}^\infty \sqrt{\alpha_{m+1}}\ip{y_m}{A_{x,y}f}x_{m+1}= \\
   &=& \sum_{m=1}^\infty \sqrt{\alpha_{m+1}}\sqrt{\alpha_{m+1}}\ip{y_{m+1}}{f}x_{m+1}= \\
   &=&\sum_{m=1}^\infty\alpha_{m+1}\ip{y_{m+1}}{f}x_{m+1}= \sum_{n=2}^\infty\alpha_{n}\ip{y_{n}}{f}x_{n}\\
   &=&\sum_{n=1}^\infty\alpha_{n}\ip{y_{n}}{f}x_{n}=H_{x, y}^\balpha f,
\end{eqnarray*}
which is what we had to prove. Of course, our assertion  $ii)$ can be proved in the same way.

\end{proof}

Of course, factorizability of the Hamiltonians could be true not
just on  $\D_x$ and $\D_y$, but also on larger sets. In other words,
Proposition \ref{propfact} does not exclude that, for instance,
$H_{x, y}^\balpha \hat f=B_{x,y}A_{x,y}\hat f$ for some $\hat f$ belonging to
$\Hil\setminus\D_x$. This is the case when $D(H_{x, y}^\balpha )\supset \D_x$,
and when the set of vectors $\hat f$ such that $A_{x,y}\hat f\in
D(B_{x,y})$ is  larger than $\D_x$ too.

\begin{rem}
The possibility of factorizing the Hamiltonian of a physical system is quite useful in concrete applications, both for general reasons, and in connection with pseudo-hermitian and with supersymmetric quantum mechanics (SUSY-QM). This, in fact, simplifies the computation of the eigenstates and also  can produce more exactly solvable models, i.e. quantum mechanical Hamiltonians with known eigenvalues and eigenvectors. In fact, if $H$ can be written (at least formally) in a factorized form $H=BA$, using SUSY-QM we can deduce how and when the eigenvectors and the eigenvalues of the new Hamiltonian $H_1:=AB$ can be found, \cite{susy1,susy2}. Sometimes it happens that the procedure can be iterated, and in this case one is able to deduce a full family of solvable models.  Moreover, the same operators used to factorize the Hamiltonians are often used in connection with bi-coherent states, \cite{bag0}.
\end{rem}

\medskip

As in Section \ref{sectRB}, we can further introduce two sequences of
strictly positive real numbers ${\bbeta}:=\{\beta_n>0, \,n\in
\Bbb N\}$ and ${ \bgamma}:=\{\gamma_n>0, \,n\in \Bbb N\}$, and two related
operators $S_x^{\bbeta}$ and $S_y^{\bgamma}$ as follows:

$$D(S_x^{\bbeta})=\left\{f\in\Hil:\, \sum_{n=1}^\infty \beta_n\left<x_n,f\right>\,x_n\in\Hil\right\},\quad D(S_y^{\bgamma})=\left\{g\in\Hil:\, \sum_{n=1}^\infty\gamma_n \left<y_n,g\right>\,y_n\in\Hil\right\},$$
and \be S_x^{\bbeta}f=\sum_{n=1}^\infty
\beta_n\left<x_n,f\right>\,x_n, \qquad S_y^{\bgamma}g=\sum_{n=1}^\infty \gamma_n\left<y_n,g\right>\,y_n,
\label{24}\en for all $f\in D(S_x^{\bbeta})$ and $g\in D(S_y^{\bgamma})$. These operators are positive and, if densely defined, are
symmetric too. For instance, this is true if $\F_x$ and $\F_y$ are bases for $\G$, or when some suitable conditions on $\bbeta$ or $\bgamma$ are satisfied {as we will see in the following}. Whenever $D(S_x^{\bbeta})$ and $D(S_y^{\bgamma})$ are dense, both $S_x^{\bbeta}$ and $S_y^{\bgamma}$ admit a
self-adjoint (Friedrichs) extension, see \cite{peder}. Definition (\ref{24}) and the biorthogonality of the families $\F_x$ and $\F_y$ imply that, for all $n$, $y_n\in D(S_x^{\bbeta})$, $x_n\in D(S_y^{\bgamma})$, $S_x^{\bbeta}y_n=\beta_nx_n$ and  $S_y^{\bgamma}x_n=\gamma_ny_n$. Hence these operators map $\F_x$ into $\F_y$ and viceversa, with some extra normalization factor which we cannot get rid of.

Let us now see in details what happens when we consider the sets of vectors introduced in Sections \ref{sectFE} and \ref{sectSE}.

\subsection{Back to the first example}\label{Sect: BFE}
As we have shown in Section \ref{sectFE}, each vector of $\G$ can be written as a finite linear combination of the $x_n$'s. Then, being $\G$ dense in $\Hil$, $H_{x, y}^\balpha $ is densely defined. In fact, from (\ref{23}), we see that $\G\subseteq D(H_{x, y}^\balpha )$. On the other hand, in general we cannot say, using the same argument, that $\G$ is also contained in $D(H_{y,x}^\balpha )$, since $\F_y$ is not a basis for $\G$. Therefore, we cannot conclude that  $H_{y,x}^\balpha $ is densely defined, in general. However, it is possible to prove that, if ${\balpha}$ is such that $\{n\alpha_n\}$ belongs to the Hilbert space $l^2(\Bbb N)$, i.e. if $\sum_{n=1}^\infty n^2|\alpha_n|^2<\infty$, then $\G\subseteq D(H_{y,x}^\balpha )$, so that $H_{y,x}^\balpha $ is densely defined too. In fact, let $f\in\G$, then $f$ can be written as a finite linear combination $f=\sum_{l=1}^Mc_le_l$, for some $M$ with $c_l=\left<e_l,f\right>$. Then, after few computations,
$$
H_{y,x}^\balpha f=\sum_{l=1}^M\frac{c_l}{l}\sum_{n=l}^\infty \alpha_n y_n.
$$
It is clear that the series on the right-hand side converges if and only if $\sum_{l=1}^M\frac{c_l}{l}\sum_{n=1}^\infty \alpha_n y_n=:\tilde c_f \sum_{n=1}^\infty \alpha_n y_n$ converges, since the two series differ for a finite number of terms. Here we have introduced $\tilde c_f=\sum_{l=1}^M\frac{c_l}{l}$, which is clearly well defined. Then $f\in D(H_{y,x}^\balpha )$ if $\sum_{n=1}^\infty \alpha_n y_n$ converges in $\Hil$. Recalling now that for every $n\in\mathbb{N}$, $y_n=ne_n-(n+1)e_{n+1}$, one can check that
$$
\left\|\sum_{n=1}^\infty \alpha_n y_n\right\|^2=\sum_{n=1}^\infty |\alpha_n|^2\left(n^2+(n+1)^2\right)-\sum_{n=1}^\infty \left((n+1)^2 \overline{\alpha_n}\,\alpha_{n+1}+c.c.\right).
$$
Here $c.c.$ stands for complex conjugate. Using now our assumption on ${\balpha}$, and the Schwartz inequality, we conclude that $\|\sum_{n=1}^\infty \alpha_n y_n\|$ is finite, which is what we had to prove. From now on we will assume that $\balpha$ is such that $\{n\alpha_n\}\in l^2(\Bbb N)$.

Let us now see what can be said for the operators $S_y^{\bgamma}$ and $S_x^{\bbeta}$. Because of the properties of $\F_x$, $S_y^{\bgamma}$ is densely defined, symmetric and positive for any possible choice of $\bgamma=\{\gamma_n\}$, with $\gamma_n>0$. In contrast, the fact that each $y_n$ belongs to $D(S_x^{\bbeta})$ does not ensure us that $S_x^{\bbeta}$ is densely defined as well, in general. However, in analogy with what we have done for $H_{y,x}^\balpha $, we can see that, taken $f=\sum_{l=1}^Mc_le_l\in \G$, then
$ S_x^{\bbeta}f=\sum_{l=1}^M\frac{c_l}{l}\sum_{n=l}^\infty
\beta_n x_n. $ It is clear that the series on the right-hand side
converges if and only if $\sum_{n=1}^\infty \beta_n x_n$ converges.
Now, since for every $n\in \mathbb{N}$,
$\|x_n\|=\left(\sum_{k=1}^n\frac{1}{k^2}\right)^{1/2}<\left(\frac{\pi^2}{6}\right)^{1/2}$,
$$
\left\|\sum_{n=1}^\infty \beta_n x_n\right\|\leq \sum_{n=1}^\infty \beta_n\|x_n\|<\frac{\pi}{\sqrt{6}}\sum_{n=1}^\infty \beta_n,
$$ which is convergent if $\bbeta\in l^1(\Bbb N)$. Then, when this happens, $\G\subseteq D(S_x^{\bbeta})$, and therefore $S_x^{\bbeta}$ is also densely defined.
Of course, taking $\beta_n=1$  for every $n\in\mathbb{N}$ (as it is sometimes found in the literature, see \cite{bagbook,bit,bellom} for instance), does not appear to be a
good choice here, since in this case $D(S_x^\bbeta)$ is not a dense set, in principle. So we will not make this choice.
Now, if we  fix $\gamma_n=\frac{1}{\beta_n}$, we deduce that
$$
S_x^{\bbeta}S_y^{\bgamma}x_n=x_n,\quad\mbox{ and }\quad S_y^{\bgamma}S_x^{\bbeta}y_n=y_n,
$$
for all $n\in\Bbb N$. This, of course, does not imply that
$S_x^{\bbeta}$ is the inverse of $S_y^{\bgamma}$. In fact,
neither $\F_x$ nor $\F_y$ are bases for $\Hil$, so that the above
equalities cannot both be extended automatically to the whole Hilbert space. Nevertheless, they can be extended on some large sets, i.e. on $\D_x$ and on $\D_y$, which are in fact rather rich sets.

As in {\cite[Proposition 2.3]{bit}}, these operators
produce interesting intertwining relations:
\begin{equation}\label{two equalities}
\left(H_{x, y}^\balpha S_x^\bbeta-S_x^\bbeta H_{y,x}^\balpha \right)y_n=0,\qquad
\left(H_{y,x}^\balpha S_y^\bgamma-S_y^\bgamma H_{x, y}^\balpha \right)x_n=0,
\end{equation}
for all $n\in\Bbb N$, which are related to the fact that $H_{x, y}^\balpha $ and $H_{y,x}^\balpha $ share the same eigenvalues, and that $S_x^\bbeta$ and $S_y^\bgamma$ map $\F_x$ into (multiples of) $\F_y$ and viceversa.
As before, the second equality in \eqref{two equalities} can be extended to all vectors of $\G$, while the first one, with no further extra assumption,  cannot.
Finally, as stated at the end of Section \ref{sectSGR}, due to the fact that $S_x^{\bbeta}$ and $S_y^{\bgamma}$ are
positive and symmetric, they admit  self-adjoint extensions which we still indicate with the same symbols, and which are also positive. Hence, they admit positive square roots, which can be used to introduce,
at least formally,

$$\hat
e_n:=\frac{1}{\sqrt{\beta_n}}\left(S_x^{\bbeta}\right)^{1/2}y_n,
\qquad
h_{x,y}=\left(S_y^{\bgamma}\right)^{1/2}H_{x, y}^\balpha \left(S_x^{\bbeta}\right)^{1/2}.$$
Then we can, again formally, check that $\hat
e_n:=\frac{1}{\sqrt{\gamma_n}}\left(S_y^{\bgamma}\right)^{1/2}x_n$
and that $h_{x,y}\hat e_n=\alpha_n\hat e_n$. This shows that
$h_{x,y}$ has the same eigenvalues $\{\alpha_n\}$ as $H_{x, y}^\balpha $, for
instance. It worths to stress that these last  claims are based on
some subtle mathematical assumptions, which are not necessarily
satisfied, in general, if $\F_x$ and $\F_y$ are not Riesz or o.n.
bases. We refer to \cite{bit} for more details on this kind of
problems in the easiest situation, i.e. when $\F_x$ and $\F_y$ are
Riesz bases indeed. Here we just want to say that, while it is easy
to see that $\F_{\hat e}=\{\hat e_n\}$ is an  o.n. set, we can
conjecture that $\F_{\hat e}$ is not a basis for $\Hil$. In fact,
this set is the image of $\F_x$ (and $\F_y$), none of which is a
basis for $\Hil$.

\subsection{Back to the second example}\label{bts}

Similar considerations as those discussed in Section \ref{Sect: BFE}
can also  be worked out by considering the two sets introduced in
Section \ref{sectSE}, where we have shown, among other things, that
the $x_n$'s form a basis for $\G$ and for $\Hil$ too. Then, being
$\G$ dense in $\Hil$, $H_{x, y}^\balpha $ is densely defined. In fact, from
(\ref{23}), we see that $\G\subseteq D(H_{x, y}^\balpha )$. On the other hand,
in general we cannot say, using the same argument, that $\G$ is also
contained in $D(H_{y,x}^\balpha )$, since $\F_y$ is not a basis for $\G$.

However, it is possible to prove that, if ${\balpha}$ is such that
$\{\alpha_n\}$ belongs to $l^2(\Bbb N)$, i.e. if
$\sum_{n=1}^\infty|\alpha_n|^2<\infty$, then $\G\subseteq
D(H_{y,x}^\balpha )$, so that  $H_{y,x}^\balpha $ is  densely defined too. In fact,
let $f\in\G$ and, as before, let $f=\sum_{l=1}^Mc_le_l$, for some
$M$, with $c_l=\left<e_l,f\right>\in\mathbb{C}$. Then

\begin{eqnarray*}
 H_{y,x}^\balpha f&=& H_{y,x}^\balpha \left(\sum_{k=1}^Mc_ke_k\right)=\sum_{k=1}^M c_k  H_{y,x}^\balpha e_k=\sum_{k=1}^M c_k\sum_{n=1}^\infty
\alpha_n
\ip{x_n}{e_k}y_n  \\
 &=&\sum_{k=1}^M c_k \sum_{n=k}^\infty (-1)^{n+k}\alpha_n y_n
\end{eqnarray*}

being
\begin{equation}\label{innpro}
\ip{x_n}{e_k}=(-1)^n\ip{-e_1+e_2+\cdots+(-1)^ne_n}{e_k}
=
\left\{
  \begin{array}{ll}
    (-1)^{n+k}, & \hbox{for $n\geq k$;} \\
    0, & \hbox{otherwise.}
  \end{array}
\right.
\end{equation}

Now,
\begin{eqnarray*}
   \|H_{y,x}^\balpha f\|&=& \left\|\sum_{k=1}^M c_k \sum_{n=k}^\infty (-1)^{n+k}\alpha_n
y_n\right\|=\left\|\sum_{k=1}^M (-1)^{k}c_k \sum_{n=k}^\infty
(-1)^{n}\alpha_ny_n\right\|\\& \leq& \sum_{k=1}^M |c_k|
\left\|\sum_{n=k}^\infty
(-1)^{n}\alpha_ny_n\right\|=L_f\left\|\sum_{n=k}^\infty
(-1)^{n}\alpha_n y_n\right\|\end{eqnarray*} with $L_f=\sum_{k=1}^M
|c_k|$. Since $\left\|\sum_{n=k}^\infty
(-1)^{n}\alpha_ny_n\right\|<\infty$ if and only if
$\left\|\sum_{n=1}^\infty (-1)^{n}\alpha_n y_n\right\|<\infty$, and
since $$\left\|\sum_{n=1}^\infty (-1)^{n}\alpha_n
y_n\right\|^2=2\left[\sum_{n=1}^\infty
|\alpha_n|^2-\Re\left(\sum_{n=1}^\infty
\alpha_n\overline{\alpha_{n+1}} \right)\right]$$ (where $\Re z$
stays for real part of $z$) after some calculations, we have:
$$\left\|\sum_{n=1}^\infty (-1)^{n}\alpha_n
y_n\right\|^2\leq3\sum_{n=1}^\infty |\alpha_n|^2$$ and the r.h.s.
converges since $\{\alpha_n\}\in l^2( \mathbb{N})$.

Consider now the operators defined as in \eqref{24}. Because of the
properties of $\F_x$, $S_y^\bgamma$ is densely defined, symmetric and
positive. On the other hand, the fact that each $y_n$ belongs to
$D(S_x^\bbeta)$ does not ensure us that $S_x^\bbeta$ is densely
defined as well, as we have already observed in the previous section. However, also in this case it is
possible to give some sufficient conditions in order  $S_x^\bbeta $ to be
densely defined. For example, it is enough to require that
   $\{\beta_n\sqrt{n}\}\in l^1(\mathbb{N})$. If this is the case, then, recalling  \eqref{innpro},

\begin{eqnarray*}
  \|S_x^\bbeta f\|  &=& \left\|S_x^\bbeta \left(\sum_{k=1}^Mc_k e_k\right)\right\| =
   \left\|\sum_{k=1}^Mc_k\sum_{n=1}^\infty\beta_n
   \left<x_n,e_k\right>\,x_n\right\| =\left\|\sum_{k=1}^Mc_k\sum_{n=k}^\infty\beta_n
   (-1)^{n+k}\,x_n\right\| \\
   &\leq& \sum_{k=1}^M|c_k|\left\|\sum_{n=k}^\infty\beta_n
   (-1)^n\,x_n\right\|,
\end{eqnarray*}
which converges if and only if $\sum_{k=1}^M|c_k|\left\|\sum_{n=1}^\infty\beta_n
   (-1)^n\,x_n\right\|= L_f\left\|\sum_{n=1}^\infty\beta_n
   (-1)^n\,x_n\right\|$ converges, which is trivially true if $\{\beta_n\sqrt{n}\}\in l^1(\mathbb{N})$, with  $L_f$ the same constant as before.

Here it is possible to repeat the same consideration as in Section \ref{Sect: BFE}:  the fact that $H_{y,x}^\balpha $ and $H_{x, y}^\balpha $ share the same eigenvalues,
and that $S_y^\bgamma$ and $S_x^\bbeta$ map $\F_y$ into (multiple of) $\F_x$ and viceversa is
reflected by the following weak form of the intertwining relations:
\begin{equation}\label{two more equalities}
\left(H_{y,x}^\balpha S_y^\bgamma-S_y^\bgamma H_{x, y}^\balpha \right)x_n=0,\qquad \left(H_{x, y}^\balpha S_x^\bbeta-S_x^\bbeta H_{y,x}^\balpha \right)y_n=0,
\end{equation}
for all $n\in\Bbb N$. Since $\F_x$ is a basis for $\G$ we can
conclude that the first equality in \eqref{two more equalities} holds on $\G$,  because every
vector $f\in \G$ can be written as a finite combination of the
$x_n$'s. In contrast, the second equality holds on the whole $\G$
only under additional assumptions.

\vspace{2mm}

\begin{rem}Going back to the example in Section \ref{sectTE},
it is clear that we can adopt the same representation for the
Hamiltonians $H_1$ and $H_2$ than in this section and write them as
in (\ref{23}). Quite often, see \cite{fg,bit,bellom}, one
adopts the more compact expressions:
$$
H_1=\sum_{n=0}^\infty E_n y_n\otimes \overline{x_n}, \qquad H_2=\sum_{n=0}^\infty E_n x_n\otimes \overline{y_n},
$$
where, for instance,  $\left(y_n\otimes
\overline{x_n}\right)(f)=\left<x_n,f\right>y_n$, for all $f\in\Hil$.
Of course, these equations must be completed with some information
on the domains of $H_1$ and $H_2$. For instance $D(H_1)=\{f\in\Hil:
H_1f \,\, \mbox{ exists  in } \Hil\}$. For what we have seen before,
this is surely dense in $\Lc^2(\Bbb R)$ since $D(\Bbb R)\subset
D(H_1)$. The same result can be shown for $H_2$.
\end{rem}

\section{Conclusions}
In this paper we have shown how and when some particular
biorthogonal sets, the so-called $\G$-quasi bases, can be used to
define manifestly non self-adjoint operators with known eigenvectors
and simple punctual spectra, even when these eigenvectors do not
form bases for the Hilbert space where the model is defined. In
particular, we have devoted a part of the paper to analyze in some
details the properties of three $\G$-quasi bases, and another part
to show how these sets can be used to define Hamiltonians and ladder
operators, and how the latter  ones can be used to factorize the
Hamiltonians themselves.

Our paper can be seen as another step toward a better comprehension of the role of biorthogonal sets in physical contexts where self-adjointness of the observables is not required. Also, from a more mathematical side, the paper suggests to undertake a deeper analysis of $\G$-quasi bases and of their relations with frames and this is, in fact, one of our future project.

\section*{Acknowledgements}

This work was partially supported by the University of Palermo, by the Gruppo Nazionale per la Fisica Matematica (GNFM) and by the Gruppo Nazionale per l'Analisi Matematica, la
Probabilit\`{a} e le loro Applicazioni (GNAMPA) of the Istituto
Nazionale di Alta Matematica (INdAM).


\begin{thebibliography}{99}
\bibitem{AnTr2014} J.-P. Antoine and C.
Trapani, {\em Metric operators, generalized hermiticity and lattices
of hilbert spaces}, in {\em Non-selfadjoint operators in quantum
physics}, F. Bagarello, J.P. Gazeau, F.H. Szafraniec and M. Znoijl
Eds., 345-402, John Wiley and Sons, (2015)
\bibitem{bagbook_thebook} F. Bagarello, J. P. Gazeau, F. Szafraniec and M. Znojil Eds, {\em Non-selfadjoint operators in quantum physics: Mathematical aspects},  J. Wiley and Sons,  (2015)
\bibitem{bagbook} F. Bagarello, {\em Deformed canonical (anti-)commutation relations and non hermitian hamiltonians}, in {\em Non-selfadjoint operators in quantum physics: Mathematical aspects}, F. Bagarello, J. P. Gazeau, F. Szafraniec and M. Znojil Eds, J. Wiley and Sons, (2015)
\bibitem{bag1} F. Bagarello, {\em More mathematics for pseudo-bosons},  J. Math. Phys., {\bf 54}, 063512 (2013)
\bibitem{bag0} F. Bagarello, {\em Pseudo-bosons, Riesz bases and coherent states}, J. Math. Phys., {\bf 50},  023531 (2010)
\bibitem{bagint} F. Bagarello {\em Mathematical aspects of intertwining
operators: the role of Riesz bases},   J. Phys. A,  {\bf 43},  175203 (2010) (12pp)
\bibitem{fg} F. Bagarello, G. Bellomonte,  {\em On non-self-adjoint operators defined by Riesz bases in Hilbert and rigged Hilbert spaces},
 Proceedings of the 8th International Conference on Topological Algebras and their Applications (ICTAA-2014), to appear.
\bibitem{bit}  F. Bagarello, A. Inoue, C. Trapani, {\em Non-self-adjoint hamiltonians defined by Riesz bases}, J. Math. Phys., {\bf 55}, 033501, (2014)
\bibitem{bellom} G. Bellomonte, {\em Bessel sequences, Riesz-like bases and Operators in triplets of Hilbert
Spaces}, Springer Proceedings in Physics, to appear
\bibitem{cg} G. Bellomonte, C. Trapani, {\em Riesz-like bases in rigged Hilbert
spaces},  {submitted to Z. Anal. Anwendungen.}
\bibitem{ben1} C. M. Bender, S. Boettcher, {\em Real Spectra in Non-. Hermitian Hamiltonians Having PT-Symmetry}, Phys. Rev. Lett.
{\bf 80}, 5243-5246, (1998)
\bibitem{chri}O. Christensen, {\em An Introduction to Frames and Riesz Bases}, Birkh\"auser, Boston, (2003)
\bibitem{susy1} F. Cooper, A. Khare and U. Sukhatme, {\em Supersimmetry and quantum mechanics},  World Scientific, Singapore (2001)
\bibitem{dau} I. Daubechies, {\em Ten Lectures on Wavelets},
        Society for Industrial and Applied Mathematics, Philadelphia (1992)
\bibitem{dau2} I. Daubechies, A. Grossmann, Y. Meyer, {\em Painless nonorthogonal expansions}, J. Math. Phys., {\bf 27},
1271–1283 (1986)
\bibitem{dav} E.B. Davies,  {\em Linear operators and their spectra}, Cambridge University Press, Cambridge (2007)
\bibitem{dav1} E.B. Davies, {\em Pseudospectra, the harmonic oscillator and complex resonances}, Proc. Roy. Soc. London A, {\bf 455},
585-599, (1999)
\bibitem{dav2} E.B. Davies, B. J. Kuijlaars, {\em Spectral asymptotics of the non-self-adjoint harmonic oscillator}, J. London Math. Soc., {\bf
70}, 420-426, (2004)
\bibitem{gazbook} J.P. Gazeau,  {\em Coherent States in Quantum Physics},  Wiley-VCH, Berlin 2009
\bibitem{hansen} V.L. Hansen, {\em Functional analysis: entering Hilbert space}, World Scientific, Singapore (2006)
\bibitem{heil} C. Heil, {\em A basis theory primer: expanded edition}, Springer, New York, (2010)
\bibitem{susy2} G. Junker, {\em Supersimmetric methods in  quantum and statistical physics},  Springer-Verlag, Berlin Heidelberg (1996)
\bibitem{kol} A. Kolmogorov and S. Fomine, {\em El\'ements de la th\'eorie des fonctions et de lanalyse fonctionelle}, Mir (1973)
\bibitem{petr} {D. Krejcirik, P. Siegl, M. Tater, J. Viola, {\em Pseudospectra in non-Hermitian quantum mechanics}, arXiv:1402.1082 [math-SP].}
\bibitem{kuru1} Kuru S., Tegmen A., Vercin A., {\em Intertwined isospectral potentials in an arbitrary dimension},
J. Math. Phys, {\bf 42}, No. 8, 3344-3360, (2001)
\bibitem{kuru2} Kuru S., Demircioglu B., Onder M., Vercin A., {\em Two families of superintegrable and
isospectral potentials in two dimensions}, J. Math. Phys, {\bf 43}, No. 5, 2133-2150, (2002)
\bibitem{messiah} A. Messiah, {\em Quantum mechanics}, vol. 1, North Holland Publishing Company, Amsterdam, (1961)
\bibitem{mosta} A. Mostafazadeh, {\em Pseudo-Hermitian representation of Quantum Mechanics}, Int. J. Geom. Methods Mod. Phys. {\bf 7}, 1191-1306 (2010)
\bibitem{peder} G. K. Pedersen, {\em Analysis now}, Springer-Verlag, New York (1989)
\bibitem{prud} A.P. Prudnikov,  Yu.A. Brychkov, O.I. Marichev, {\em Integrals and series}, vol. 2, Special functions, Opa, Amsterdam, (1986)
\bibitem{samani}   Samani K. A., Zarei M., {\em Intertwined
hamiltonians in two-dimensional curved spaces}, Ann. of Phys., {\bf 316}, 466-482, (2005).
\bibitem{bal} D.T. Stoeva, P. Balasz, {\em Canonical forms of unconditionally convergent multipliers}, Jour. Math. Anal. Appl., {\bf 399}, 252-259 (2013)
\bibitem{bal2} D.T. Stoeva, P. Balasz, {\em Riesz bases multipliers}, in {\em Concrete Operators, Spectral Theory, Operators in Harmonic Analysis and Approximation}, M. Cepedello Boiso, H. Hedenmalm, M. A. Kaashoek, A. Montes Rodriguez, and S. Treil, eds., Operator Theory: Advances and Applications, Springer Basel, {\bf 236}, 475-482 (2014)
\bibitem{szego} G. Szeg\"o, {\em Orthogonal Polynomials}, AMS, Providence, (1939)
\bibitem{you} R.M. Young, {\em On complete biorthogonal bases}, Proceedings of the American Mathematical Society, {\bf 83}, No. 3, 537-540, (1981)
\bibitem{zno} M. Znojil, {\em Three-Hilbert-Space formulation of Quantum Mechanics}, SIGMA {\bf 5}, 1-19, (2009)
\end{thebibliography}
\end{document}